\newcommand{\Oh}[1]
    {\ensuremath{\mathcal{O}\!\left( {#1} \right)}}
\newcommand{\sfX}{\ensuremath{\mathsf{X}}}
\begin{document}

\title{Grammar-Based Compression\\in a Streaming Model}
\toctitle{Grammar-Based Compression in a Streaming Model}
\author{Travis Gagie\inst{1}\fnmsep\thanks
    {Funded by the Millennium Institute for Cell Dynamics and Biotechnology (ICDB),
    Grant ICM P05-001-F, Mideplan, Chile.}
    \and Pawe\l\ Gawrychowski\inst{2}}
\authorrunning{T. Gagie and P. Gawrychowski}
\tocauthor{Travis Gagie and Pawe\l\ Gawrychowski}
\institute{Department of Computer Science\\
    University of Chile\\
    \email{travis.gagie@gmail.com}
    \\\mbox{}\\
    \and Institute of Computer Science\\
    University of Wroc{\l}aw, Poland\\
    \email{gawry1@gmail.com}}
\maketitle

\begin{abstract}
We show that, given a string $s$ of length $n$, with constant memory and logarithmic passes over a constant number of streams we can build a context-free grammar that generates $s$ and only $s$ and whose size is within an $\Oh{\min \left( g \log g, \sqrt{n / \log n} \right)}$-factor of the minimum $g$.  This stands in contrast to our previous result that, with polylogarithmic memory and polylogarithmic passes over a single stream, we cannot build such a grammar whose size is within any polynomial of $g$.
\end{abstract}

\section{Introduction} \label{sec:intro}

In the past decade, the ever-increasing amount of data to be stored and manipulated has inspired intense interest in both grammar-based compression and streaming algorithms, resulting in many practical algorithms and upper and lower bounds for both problems.  Nevertheless, there has been relatively little study of grammar-based compression in a streaming model.  In a previous paper~\cite{Gag09} we proved limits on the quality of the compression we can achieve with polylogarithmic memory and polylogarithmic passes over a single stream.  In this paper we show how to achieve better compression with constant memory and logarithmic passes over a constant number of streams.

For grammar-based compression of a string $s$ of length $n$, we try to build a small context-free grammar (CFG) that generates $s$ and only $s$.  This is useful not only for compression but also for, e.g., indexing~\cite{CN09,Lif07} and speeding up dynamic programs~\cite{LMWZ09}.  (It is sometimes desirable for the CFG to be in Chomsky normal form (CNF), in which case it is also known as a straight-line program.)  We can measure our success in terms of universality~\cite{KY00}, empirical entropy~\cite{NR08} or the ratio between the size of our CFG and the size \(g = \Omega (\log n)\) of the smallest such grammar.  In this paper we consider the third and last measure.  Storer and Szymanski~\cite{SS82} showed that determining the size of the smallest grammar is NP-complete; Charikar et al.~\cite{CLLP+05} showed it cannot be approximated to within a small constant factor in polynomial time unless P\,$=$\,NP, and that even approximating it to within a factor of \(o (\log n / \log \log n)\) in polynomial time would require progress on a well-studied algebraic problem.  Charikar et al. and Rytter~\cite{Ryt03} independently gave $\Oh{\log (n / g)}$-approximation algorithms, both based on turning the LZ77~\cite{ZL77} parse of $s$ into a CFG and both initially presented at conference in 2002; Sakamoto~\cite{Sak05} then proposed another $\Oh{\log (n / g)}$-approximation algorithm, based on Re-Pair~\cite{LM00}.  Sakamoto, Kida and Shimozono~\cite{SKS04} gave a linear-time $\Oh{(\log g) (\log n)}$-approximation algorithm that uses $\Oh{g \log g}$ workspace, again based on LZ77; together with Maruyama, they~\cite{SMKS09} recently modified their algorithm to run in $\Oh{n \log^* n}$ time but achieve an $\Oh{(\log n) (\log^* n)}$ approximation ratio.

A few years before Charikar et al.'s and Rytter's papers sparked a surge of interest in grammar-based compression, a paper by Alon, Matias and Szegedy~\cite{AMS99} did the same for streaming algorithms.  We refer the reader to Babcock et al.~\cite{BBDMW02} and Muthukrishnan~\cite{Mut05} for a thorough introduction to streaming algorithms.  In this paper, however, we are most concerned with more powerful streaming models than these authors consider, ones that allow the use of multiple streams.  A number of recent papers have considered such models, beginning with Grohe and Schweikardt's~\cite{GS05} definition of \((r, s, t)\)-bounded Turing Machines, which use at most $r$ reversals over $t$ ``external-memory'' tapes and a total of at most $s$ space on ``internal-memory'' tapes to which they have unrestricted access.  While Munro and Paterson~\cite{MP80} proved tight bounds for sorting with one tape three decades ago, Grohe and Schweikardt proved the first tight bounds for sorting with multiple tapes.  Grohe, Hernich and Schweikardt~\cite{GHS09} proved lower bounds in this model for randomized algorithms with one-sided error, and Beame, Jayram and Rudra~\cite{BJR07} proved lower bounds for algorithms with two-sided error (renaming the model ``read/write streams'').  Beame and Huynh~\cite{BH08} revisited the problem considered by Alon, Matias and Szegedy, i.e., approximating frequency moments, and proved lower bounds for read/write stream algorithms.  Hernich and Schweikardt~\cite{HS08} related results for read/write stream algorithms to results in classical complexity theory, including results by Chen and Yap~\cite{CY91} on reversal complexity.  Hernich and Schweikardt's paper drew our attention to a theorem by Chen and Yap implying that, if a problem can be solved deterministically with read-only access to the input and logarithmic workspace then, in theory, it can be solved with constant memory and logarithmic passes (in either direction) over a constant number of read/write streams.  This theorem is the key to our main result in this paper.  Unfortunately, the constants involved in Chen and Yap's construction are enormous; we leave as future work finding a more practical proof of our results.

The study of compression in something like a streaming model goes back at least a decade, to work by Sheinwald, Lempel and Ziv~\cite{SLZ95} and De Agostino and Storer~\cite{DS96}.  As far as we know, however, our joint paper with Manzini~\cite{GM07} was the first to give nearly tight bounds in a standard streaming model.  In that paper we proved nearly matching bounds on the compression achievable with a constant amount of internal memory and one pass over the input, as well as upper and lower bounds for LZ77 with a sliding window whose size grows as a function of the number of characters encoded.  (Our upper bound for LZ77 used a theorem due to Kosaraju and Manzini~\cite{KM99} about quasi-distinct parsings, a subject recently revisited by Amir, Aumann, Levy and Roshko~\cite{AALR09}.)  Shortly thereafter, Albert, Mayordomo, Moser and Perifel~\cite{AMMP08} showed the compression achieved by LZ78~\cite{ZL78} is incomparable to that achievable by pushdown transducers; Mayordomo and Moser~\cite{MM09} then extended their result to show both kinds of compression are incomparable with that achievable by online algorithms with polylogarithmic memory.  (A somewhat similar subject, recognition of the context-sensitive Dyck languages in a streaming model, was recently broached by Magniez, Mathieu and Nayak~\cite{MMN09}, who gave a one-pass algorithm with one-sided error that uses polylogarithmic time per character and $\Oh{n^{1 / 2} \log n}$ space.)  In a recent paper with Ferragina and Manzini~\cite{FGM10} we demonstrated the practicality of streaming algorithms for compression in external memory.

In a recent paper~\cite{Gag09} we proved several lower bounds for compression algorithms that use a single stream, all based on an automata-theoretic lemma: suppose a machine implements a lossless compression algorithm using sequential accesses to a single tape that initially holds the input; then we can reconstruct any substring given, for every pass, the machine's configurations when it reaches and leaves the part of the tape that initially holds that substring, together with all the output it generates while over that part.  (We note similar arguments appear in computational complexity, where they are referred to as ``crossing sequences'', and in communication complexity.)  It follows that, if a streaming compression algorithm is restricted to using polylogarithmic memory and polylogarithmic passes over one stream, then there are periodic strings with polylogarithmic periods such that, even though the strings are very compressible as, e.g., CFGs, the algorithm must encode them using a linear number of bits; therefore, no such algorithm can approximate the smallest-grammar problem to within any polynomial of the minimum size.  Such arguments cannot prove lower bounds for algorithms with multiple streams, however, and we left open the question of whether extra streams allow us to achieve a polynomial approximation.  In this paper we use Chen and Yap's result to confirm they do: we show how, with logarithmic workspace, we can compute the LZ77 parse and turn that into a CFG in CNF while increasing the size by a factor of $\Oh{\min \left( g \log g, \sqrt{n / \log n} \right)}$ --- i.e., at most polynomially in $g$.  It follows that we can achieve that approximation ratio while using constant memory and logarithmic passes over a constant number of streams.

\section{LZ77 in a Streaming Model} \label{sec:LZ77}

Our starting point is the same as that of Charikar et al., Rytter and Sakamoto, Kida and Shimozono, but we pay even more attention to workspace than the last set of authors.  Specifically, we begin by considering the variant of LZ77 considered by Charikar et al. and Rytter, which does not allow self-referencing phrases but still produces a parse whose size is at most as large as that of the smallest grammar.  Each phrase in this parse is either a single character or a substring of the prefix of $s$ already parsed.  For example, the parse of ``{\sf how-much-wood-would-a-woodchuck-chuck-if-a-woodchuck-could-chuck-wood?}'' is ``{\sf h$|$o$|$w$|$-$|$m$|$u$|$c$|$h$|$-$|$w$|$o$|$
\linebreak
o$|$d$|$-wo$|$u$|$l$|$d-$|$a$|$-wood$|$ch$|$uc$|$k$|$-$|$chuck-$|$i$|$f$|$-a-woodchuck-c$|$ould-$|$chuck-$|$wood$|$?}''.

\begin{lemma}[Charikar et al., 2002; Rytter, 2002] \label{lem:C+R02}
The number of phrases in the LZ77 parse is a lower bound on the size of the smallest grammar.
\end{lemma}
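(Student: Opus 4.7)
The plan is to produce one valid LZ77-style parse of $s$ --- each phrase either a single character or a substring of the already-parsed prefix --- containing at most $g$ phrases. An easy exchange argument shows that the greedy LZ77 parse minimises the number of phrases over all such parses, so this will suffice for the lemma.

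Fix a smallest grammar $G$ for $s$, written as rules $X \to \alpha_{X,1}\cdots\alpha_{X,k_X}$ of total right-hand-side length $g$. Let $T$ be the derivation tree of $s$ under $G$, and for each node $v$ write $w_v$ for its yield and $p_v$ for the position where $w_v$ begins in $s$. I would traverse $T$ in left-to-right depth-first order; on visiting each node $v$, I check whether $w_v$ also occurs somewhere strictly to the left of $p_v$ in $s$. If so, I emit one LZ77 phrase that copies $w_v$ from that earlier occurrence and skip the subtree at $v$; otherwise, if $v$ is a terminal leaf I emit its single character as a phrase, and if $v$ is internal I recurse into its children in left-to-right order.

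The count comes from a charging argument. For any non-terminal $X$, once the traversal has completely handled one $X$-labelled node and written out $w_X$, every subsequent node whose yield equals $w_X$ passes the earlier-occurrence test and triggers a single copy-phrase rather than a recursion; hence the body of $X$'s rule is entered at most once in total. Charge each emitted phrase to the position it occupies in its parent node's right-hand side --- a terminal leaf to its terminal symbol, and a non-terminal node whose yield had an earlier occurrence to its non-terminal symbol --- so each of the $g$ symbol-occurrences of $G$ receives at most one charge, bounding the number of phrases by $g$.

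The main obstacle I anticipate is pinning down the ``each rule entered at most once'' invariant. The earlier-occurrence test depends only on substrings of $s$ and is blind to the labels in $T$, so the argument must handle the case of two derivation nodes carrying different non-terminals but identical yields, and check that the charging remains injective under such cross-non-terminal coincidences.
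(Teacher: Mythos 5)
The paper does not give its own proof of this lemma---it is attributed to Charikar et al.\ and to Rytter---and your argument is essentially the standard one from those sources: traverse the derivation tree in left-to-right DFS order, cut off at any node whose yield already appears in the parsed prefix, and charge each emitted phrase to the corresponding symbol occurrence on the right-hand side of its parent's rule, noting that each rule body is entered at most once because after the first full expansion of $X$ the string $w_X$ lies entirely inside the parsed prefix ahead of every later $X$-labelled node. The cross-label coincidence you flag is harmless: two nodes with different labels but equal yields only causes \emph{extra} cut-offs, never two charges to the same right-hand-side slot, so the charging remains injective. One precision worth adding: the ``occurs earlier'' test must require the earlier occurrence of $w_v$ to lie \emph{entirely} within $s[1..p_v-1]$, not merely to start to the left of $p_v$, since the LZ77 variant used here forbids self-referencing phrases; your disjoint-yield observation $p_{v_1}+|w_X|\le p_{v_2}$ is exactly what makes this stronger test succeed, and it is also what makes the constructed parse comparable to the greedy one. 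With that clarification, combining your at-most-$g$-phrase parse with the standard greedy-stays-ahead optimality of greedy non-self-referencing LZ77 completes the proof.
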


\noindent As an aside, we note that Lemma~\ref{lem:C+R02} and results from our previous paper~\cite{Gag09} together imply we cannot compute the LZ77 parse with one stream when the product of the memory and passes is sublinear in $n$.

It is not difficult to show that this LZ77 parse --- like the original --- can be computed with read-only access to the input and logarithmic workspace.  Pseudocode for doing this appears as Algorithm~\ref{alg:parse}.  On the example above, this pseudocode produces ``{\sf how-much-wood(9,3)ul(13,2)a(9,5)(7,2)(6,2)k-(27,6)if(20,14)
\linebreak
(16,5)(27,6)(10,4)?}''.

\begin{lemma} \label{lem:lz77}
We can compute the LZ77 parse with logarithmic workspace.
\end{lemma}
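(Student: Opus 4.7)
The plan is to verify that the straightforward greedy parse sketched in Algorithm~\ref{alg:parse} can be implemented by a Turing machine with read-only access to $s$ and only $\Oh{\log n}$ bits of work tape. I would maintain a single index $i$ marking the start of the next phrase to emit, initially $i = 1$. At each outer step I determine the longest $\ell \geq 1$ such that some $j$ with $j + \ell \leq i$ satisfies $s[j \ldots j+\ell-1] = s[i \ldots i+\ell-1]$; if no such $\ell$ exists I emit the single character $s[i]$ and set $i \gets i+1$, and otherwise I emit $(j^\star, \ell^\star)$ for a witness of the longest match and set $i \gets i + \ell^\star$.

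To find that longest match in logspace I would use three nested loops: an outer loop trying candidate lengths $\ell$ starting from $1$, a middle loop trying candidate starting positions $j \in \{0, \ldots, i - \ell\}$, and an innermost loop comparing $s[j+k]$ with $s[i+k]$ character by character for $k = 0, \ldots, \ell-1$. I would overwrite the stored witness $(j^\star, \ell^\star)$ whenever a strictly longer match is found, and halt the outer loop as soon as some length $\ell$ admits no match at all (using that any length-$\ell$ match begins with a valid length-$(\ell-1)$ match, so failure at $\ell$ means $\ell^\star = \ell - 1$ is optimal). Every variable in play --- $i$, $\ell$, $j$, $k$, and the stored witness $(j^\star, \ell^\star)$ --- is an index into $s$ and so fits in $\Oh{\log n}$ bits; only a constant number of these are live at any moment, and each phrase is streamed to the output tape as soon as it is determined, so nothing from earlier phrases need be retained.

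The only subtlety I see is the prohibition against self-referencing phrases required by the Charikar--Rytter variant of LZ77, but this is enforced purely by the condition $j + \ell \leq i$ in the middle loop and so costs no additional space. The hard part is really not hard: I am making no claim about time complexity, which is polynomial but of no interest here. What matters for the later invocation of Chen and Yap's theorem is deterministic logspace computability with read-only input, which is exactly what the nested-loop implementation provides.
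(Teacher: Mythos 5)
Your proof is correct and follows essentially the same approach as the paper: compute each phrase greedily by searching for the longest match using a constant number of $\Oh{\log n}$-bit indices in nested loops, streaming each phrase to the output as soon as it is determined. The only cosmetic difference is that the paper emits a literal whenever the longest match has length at most $1$ rather than only when it has length $0$, but this changes neither the phrase boundaries nor the logspace argument.
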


\begin{proof}
The first phrase in the parse is the first letter in $s$; after outputting this, we always keep a pointer $t$ to the division between the prefix already parsed and the suffix yet to be parsed.  To compute each later phrase in turn, we check the length of the longest common prefix of \(s [i..t - 1]\) and \(s [t..n]\), for \(1 \leq i < t\); if the longest match has length 0 or 1, we output \(s [t]\); otherwise, we output the value of the minimal $i$ that maximizes the length of the longest common prefix, together with that length.  This takes a constant number of pointers into $s$ and a constant number of $\Oh{\log n}$-bit counters. \qed
\end{proof}

\begin{algorithm}[t]
\(t \leftarrow 1\)\;
\While{\(t \leq n\)}
    {\(\mathit{max\_match} \leftarrow 0\)\;
    \(\mathit{max\_length} \leftarrow 0\)\;
    \For{\(i \leftarrow 1 \ldots t - 1\)}
        {\(j \leftarrow 0\)\;
        \While{\(s [i + j] = s [t + j]\)}
            {\(j \leftarrow j + 1\)\;}
        \If{\(j > \mathit{max\_length}\)}
            {\(\mathit{max\_match} \leftarrow i\)\;
            \(\mathit{max\_length} \leftarrow j\)\;}}
    \eIf{\(\mathit{max\_length} \leq 1\)}
        {print \(s[t]\)\;
        \(t \leftarrow t + 1\)\;}
        {print \((\mathit{max\_match}, \mathit{max\_length})\)\;
        \(t \leftarrow t + max\_length\)\;}}
\BlankLine
\BlankLine

\caption{pseudocode for computing the LZ77 parse in logarithmic workspace}
\label{alg:parse}
\end{algorithm}

Combined with Chen and Yap's theorem below, Lemma~\ref{lem:lz77} implies that we can compute the LZ77 parse with constant workspace and logarithmic passes over a constant number of streams.

\begin{theorem}[Chen and Yap, 1991] \label{thm:CY91}
If a function can be computed with logarithmic workspace, then it can be computed with constant workspace and logarithmic passes over a constant number of streams.
\end{theorem}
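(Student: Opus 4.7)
The plan is to simulate the given logspace Turing machine $M$ on a constant number of read/write streams. The central observation is that, because $M$ uses only $\Oh{\log n}$ workspace, each of its configurations (state, head positions, worktape contents) can be encoded in $\Oh{\log n}$ bits, and there are only polynomially many of them. The whole list of configurations, and the one-step transition relation among them, can therefore be written on a stream of polynomial length. I would then use that stream to compute reachability in the configuration graph by \emph{repeated squaring} of the transition relation: starting from $R_0$, the relation of single-step reachability, form $R_{i+1} = R_i \circ R_i$ so that $R_i$ captures reachability in $2^i$ steps. After $\Oh{\log n}$ squarings, $R_i$ covers the entire polynomial-length computation of $M$, and the output can be read off by a final pass.

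The first concrete step is to lay down $R_0$. Enumerating all $\Oh{\log n}$-bit configurations cannot be done with an internal $\Oh{\log n}$-bit counter, since we have only constant internal memory, so I would instead store the ``current configuration'' on an auxiliary stream and increment it as a long binary number by a standard constant-memory subroutine. For each such configuration, the one-step successor depends only on a constant-sized neighborhood of the worktape plus one cell of the input, which can be fetched in a constant number of passes over the input stream. Writing out all pairs $(c, c') \in R_0$ therefore takes $\Oh{1}$ passes.

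The main obstacle is the squaring step: given $R_i$ on a stream, I must produce $R_{i+1}$ using only constant internal memory, \emph{without} spending $\Omega(\log n)$ passes per squaring (which would blow the total up to $\Oh{\log^2 n}$). The natural way to compute a composition $R_i \circ R_i$ is to sort two copies of $R_i$ (one by the right endpoint, one by the left), then merge them to find witnesses $w$ with $(u,w), (w,v) \in R_i$, and finally sort the output. Doing this afresh each round is too expensive, so the key invariant I would maintain is that $R_i$ is stored already sorted by both endpoints on two different streams; the merge then produces $R_{i+1}$ in a constant number of passes, and re-establishing sortedness on $R_{i+1}$ piggybacks on the merge order. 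Verifying that this can be realized with $\Oh{1}$ streams and $\Oh{1}$ internal memory per round is the technical heart of the argument, and is exactly the place where the constants in the simulation become huge.

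Once the squaring module uses $\Oh{1}$ passes per round, summing over the $\Oh{\log n}$ rounds, plus $\Oh{1}$ passes for setting up $R_0$ and $\Oh{1}$ passes for extracting the output from $R_{\log n}$, gives a total of $\Oh{\log n}$ passes over a constant number of streams with constant internal memory, as claimed.
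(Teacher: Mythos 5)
The paper does not actually prove Theorem~\ref{thm:CY91} --- it cites Chen and Yap~\cite{CY91} and gives only a one-sentence description (a reversal-bounded machine simulates a space-bounded one ``by building a table of the possible configurations''). Your high-level plan --- enumerate the polynomially many configurations and close the transition relation by repeated squaring --- is in the same spirit as that description, so as a reconstruction of the cited argument it starts in the right place. However, there is a concrete gap exactly where you flag ``the technical heart of the argument,'' and I don't think the step you assert actually goes through.

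The problem is the cost of each squaring round. You propose maintaining $R_i$ sorted both by left endpoint and by right endpoint, merging the two copies to produce $R_{i+1}$, and claim that ``re-establishing sortedness on $R_{i+1}$ piggybacks on the merge order.'' It does not: the merge iterates over the shared witness $w$, so $R_{i+1}$ comes out in order of $w$, which is \emph{neither} the order of its own left endpoints nor of its right endpoints. To restore the invariant you must re-sort $R_{i+1}$ from scratch, and sorting a polynomial-length stream with constant internal memory and a constant number of streams costs $\Theta(\log n)$ passes. Over $\Theta(\log n)$ squaring rounds that is $\Theta(\log^2 n)$ passes, a logarithmic factor worse than the theorem claims. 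There is a second, smaller issue: you phrase the construction for a general relation $R_i$, for which the composition is a many-to-many join; producing all witnessed pairs for a fixed $w$ then forces re-reading one side of the join per partner on the other side, which can cost polynomially many reversals. Since $M$ is deterministic you can and should treat $R_i$ as a \emph{function} on configurations, which makes the join one-to-many and hence $\Oh{1}$ passes --- but that still leaves the re-sorting obstruction. Either you need a genuinely different way to lay out the tables so that one round of doubling is $\Oh{1}$ reversals, or an argument that the total re-sorting work telescopes to $\Oh{\log n}$; as written, neither is supplied, so the claimed pass bound is not established.
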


As an aside, we note that Chen and Yap's theorem is actually much stronger than what we state here: they proved that, if \(f (n) = \Omega (\log n)\) is reversal-computable (see~\cite{CY91} or~\cite{HS08} for an explanation) and a problem can be solved deterministically in \(f (n)\) time, then it can be solved with constant workspace and $\Oh{f (n)}$ passes over a constant number of tapes.  Chen and Yap showed how a reversal-bounded Turing machine can simulate a space-bounded Turing machine by building a table of the possible configurations of the space-bounded machine.  Schweikardt~\cite{Sch07} pointed out that ``this is of no practical use, since the resulting algorithm produces huge intermediate results, but it is of major theoretical interest'' because it implies that a number of lower bounds are tight.  We leave as future work finding a more practical proof our our results.

In the next section we prove the following lemma, which is the most technical part of this paper.  By the size of a CFG, we mean the number of symbols on the righthand sides of the productions; notice this is at most a logarithmic factor less than the number of bits needed to express the CFG.

\begin{lemma} \label{lem:cfg}
With logarithmic workspace we can turn the LZ77 parse into a CFG whose size is within a $\Oh{\min \left( g \log g, \sqrt{n / \log n} \right)}$-factor of minimum.
\end{lemma}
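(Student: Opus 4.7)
My plan is to introduce one non-terminal $X_j$ per LZ77 phrase $j$, together with a small collection of auxiliary non-terminals that structure the already-emitted phrases, and then to analyse essentially the same construction in two different ways in order to recover the two bounds in the minimum.

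When phrase $j$ is a literal character I simply set $X_j \to c$. When it is a copy $(i,\ell)$, the source range $s[i..i{+}\ell{-}1]$ decomposes into a (possibly empty) suffix of some earlier phrase $p$, the full phrases $X_{p+1},\ldots,X_{q-1}$, and a prefix of some phrase $q$, where $p$ and $q$ are the phrases containing positions $i$ and $i+\ell-1$. To rewrite the whole-phrase middle compactly I would maintain a balanced binary tournament of auxiliary non-terminals whose leaves are the $X_j$'s, so that any contiguous run of whole phrases is covered by $\Oh{\log z}$ auxiliary symbols, where $z \le g$ is the number of LZ77 phrases. The two boundary partial phrases I would handle by recursively introducing fresh non-terminals via the same decomposition, relying on the fact that every recursive call refers to strictly earlier material of $s$ (and therefore terminates).

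For the $\Oh{g\log g}$ factor I would charge each auxiliary tournament node and each literal character appearing in a decomposition to the LZ77 phrase that forced its introduction; a potential-function argument tracking unresolved boundary mass bounds the total charge per phrase by $\Oh{z\log z}$, yielding grammar size $\Oh{z^{2}\log z}$ and hence an approximation factor of $\Oh{g\log g}$. For the $\Oh{\sqrt{n/\log n}}$ factor I would choose the tournament's granularity with a block parameter $b$ and observe that any source range of length $\ell$ can be expressed using $\Oh{\ell/b + b}$ symbols in a balanced way; setting $b=\sqrt{n\log n}$ and summing over the $z\le g$ phrases using $\sum_j \ell_j \le n$ gives the claimed square-root bound.

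The main obstacles are two. The first is the recursive boundary handling: the naive count risks losing either an extra $\log$ or an extra $z$ factor, and a careful potential argument is needed to keep the $\Oh{g\log g}$ bound tight. The second is that every quantity I need (the phrase and offset containing a given position, the tournament node covering a given range, the source substring of a partial phrase) must be computable using only $\Oh{\log n}$-bit counters and pointers over streaming access to the LZ77 parse; this is routine but painstaking, and is where the multi-pass freedom granted by Theorem~\ref{thm:CY91} is actually used. Once these are settled, the grammar can be emitted symbol-by-symbol in logarithmic workspace, and Theorem~\ref{thm:CY91} upgrades the algorithm to constant workspace with logarithmic passes over a constant number of streams.
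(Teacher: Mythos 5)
Your approach to the $\Oh{g \log g}$ factor is essentially the paper's, recast: the paper (Lemma~\ref{lem:breaks}) first explicitly refines the LZ77 parse so that every phrase is a concatenation of complete preceding phrases, paying at most a quadratic blow-up in the number of phrases, and then (Lemma~\ref{lem:cnf}) covers consecutive runs of refined phrases with a forest of complete binary trees at an $\Oh{\log g}$ cost per production. Your version keeps the original phrases, covers the interior whole-phrase run with a tournament, and recurses on the two partial boundary phrases; if you memoize the boundary recursions, the distinct pieces you generate are exactly the paper's refined phrases and the counting matches. Two caveats: you need the memoization to be explicit (otherwise the two-way branching of the boundary recursion naively gives an exponential tree), and the ``potential-function argument'' you invoke is doing the real work --- it is precisely the content of the paper's observation that each phrase is broken at most twice per later phrase, so it deserves to be spelled out rather than asserted.

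The $\Oh{\sqrt{n / \log n}}$ factor is where your proposal diverges from the paper and, as written, does not go through. The paper does not obtain this bound from the LZ77-based construction at all: it builds a \emph{second}, independent grammar (a deduplicated balanced binary parse of $s$, i.e.\ Kieffer--Yang--Nelson--Cosman's \textsc{Bisection}), cites the Charikar et al.\ analysis that \textsc{Bisection} is an $\Oh{\sqrt{n/\log n}}$-approximation, and outputs whichever of the two grammars is smaller. Your attempt to get the square-root bound by retuning the block size $b$ inside the LZ77 construction has two gaps. First, the arithmetic is off: with $S=\sum_j \Oh{\ell_j/b + b} = \Oh{n/b + g b}$, choosing $b=\sqrt{n\log n}$ gives $S/g = \Oh{\sqrt{n\log n}}$, which is a $\log n$ factor too large; you would need $b \approx \sqrt{n/\log n}$ (or $\sqrt{n/g}$). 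Second, and more seriously, the per-phrase cost $\Oh{\ell_j/b + b}$ omits the cost of the productions that actually expand the blocks themselves; with $n/b$ blocks of size $b$ this is $\Oh{n}$ unless you deduplicate across levels, at which point you have reinvented \textsc{Bisection} and should simply use its known bound, as the paper does. So for this half of the minimum you should abandon the single-construction framing and, like the paper, build two grammars and take the smaller.

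One smaller point: the logspace implementation of Lemma~\ref{lem:cfg} should be done on the read-only-input/logspace model directly; Theorem~\ref{thm:CY91} is applied once at the end to convert that to constant memory with logarithmic passes over $\Oh{1}$ streams, not ``used'' during the logspace construction itself.
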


\noindent Together with Lemma~\ref{lem:lz77} and Theorem~\ref{thm:CY91}, Lemma~\ref{lem:cfg} immediately implies our main result.

\begin{theorem}
With constant workspace and logarithmic passes over a constant number of streams, we can build a CFG generating $s$ and only $s$ whose size is within a $\Oh{\min \left( g \log g, \sqrt{n / \log n} \right)}$-factor of minimum.
\end{theorem}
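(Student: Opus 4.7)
The plan is to compose the three preceding results into a single logarithmic-workspace computation from $s$ to the desired CFG, and then appeal to Theorem~\ref{thm:CY91} to transfer that computation into the read/write streams model. Concretely, I would first invoke Lemma~\ref{lem:lz77} to obtain the LZ77 parse of $s$ in $\Oh{\log n}$ workspace with read-only access to the input, and then feed that parse into the construction of Lemma~\ref{lem:cfg} to produce a CFG whose size is within the claimed factor of $g$.

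The only subtlety is that the intermediate LZ77 parse may itself be far too large to fit in $\Oh{\log n}$ workspace, so the two logspace routines cannot be run in the naive ``pipeline'' fashion in which the output of the first is buffered and then handed to the second. I would handle this with the standard logspace composition trick: whenever the algorithm of Lemma~\ref{lem:cfg} asks for the $i$-th symbol of the LZ77 parse, we re-run Algorithm~\ref{alg:parse} on $s$ just far enough to emit that symbol, using an $\Oh{\log n}$-bit counter alongside the workspace of Lemma~\ref{lem:lz77}. Because both stages use $\Oh{\log n}$ space and the extra counter is also $\Oh{\log n}$ bits, the composed procedure still computes the final CFG in $\Oh{\log n}$ workspace with read-only access to $s$.

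At this point Theorem~\ref{thm:CY91} applies directly: any function computable in logarithmic workspace is computable with constant workspace and logarithmic passes over a constant number of read/write streams, so the same CFG can be produced within exactly those resource bounds. The approximation guarantee is inherited unchanged from Lemma~\ref{lem:cfg}, giving the claimed $\Oh{\min(g \log g, \sqrt{n/\log n})}$ factor.

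The only part of this argument that requires real thought is the logspace composition, and even there the challenge is conceptual rather than technical: one must simply verify that the on-demand re-execution of the LZ77 parser does not blow up the workspace or the output behaviour. The truly heavy lifting (the actual grammar construction and its size analysis) is entirely deferred to Lemma~\ref{lem:cfg}, while the streaming simulation is packaged inside Chen and Yap's theorem, so the final theorem reduces to a one-line corollary of these three ingredients.
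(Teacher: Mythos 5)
Your proposal matches the paper's proof exactly: the theorem is obtained by composing Lemma~\ref{lem:lz77} with Lemma~\ref{lem:cfg} and then invoking Theorem~\ref{thm:CY91}, which is precisely what the paper states (``Together with Lemma~\ref{lem:lz77} and Theorem~\ref{thm:CY91}, Lemma~\ref{lem:cfg} immediately implies our main result''). Your explicit handling of the logspace-composition subtlety via on-demand recomputation is a standard detail the paper leaves implicit, but it does not change the route.
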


\section{Logspace CFG Construction} \label{sec:grammar}

Unlike the LZ78 parse, the LZ77 parse cannot normally be viewed as a CFG, because the substring to which a phrase matches may begin or end in the middle of a preceding phrase.  We note this obstacle has been considered by other authors in other circumstances, e.g., by Navarro and Raffinot~\cite{NR04} for pattern matching.  Fortunately, we can remove this obstacle in logarithmic workspace, without increasing the number of phrases more than quadratically.  To do this, for each phrase for which we output \((i, \ell)\), we ensure \(s [i]\) is the first character in a phrase and \(s [i + \ell]\) is the last character in a phrase (by breaking phrases in two, if necessary).  For example, the parse
\begin{center}
``{\sf h$|$o$|$w$|$-$|$m$|$u$|$c$|$h$|$-$|$w$|$o$|$o$|$d$|$-wo$|$u$|$l$|$d-$|$a$|$-wood$|$ch$|$uc$|$k$|$-$|$chuck
\linebreak
-$|$i$|$f$|$-a-woodchuck-c$|$ould-$|$chuck-$|$wood$|$?}''
\end{center}
becomes
\begin{center}
``{\sf h$|$o$|$w$|$-$|$m$|$u$|$c$|$h$|$-$|$w$|$o$|$o$|$d$|$-w\ \raisebox{-.6ex}{\rule{.3ex}{2.5ex}}$_1$o$|$u$|$l$|$d\ \raisebox{-.6ex}{\rule{.3ex}{2.5ex}}$_2$-$|$a$|$-wood$|$c\ \raisebox{-.6ex}{\rule{.3ex}{2.5ex}}$_3$h$|$uc$|$k$|$-$|$c\ \raisebox{-.6ex}{\rule{.3ex}{2.5ex}}$_4^3$huck
\linebreak
-$|$i$|$f$|^2$-a-woodchuck-c$|^{1, 4}$ould-$|$chuck-$|$wood$|$?}'',
\end{center}
where the thick lines indicate new breaks and superscripts indicate which breaks cause the new ones (which are subscripted).  Notice the break ``{\sf a-woodchuck-c$|^{1, 4}$ould}'' causes both ``{\sf w\ \raisebox{-.6ex}{\rule{.3ex}{2.5ex}}$_1$ould}'' (matching ``{\sf ould}'') and ``{\sf a-woodchuck-c\ \raisebox{-.6ex}{\rule{.3ex}{2.5ex}}$_4^3$huck}'' (matching ``{\sf a-woodchuck-c}''); in turn, the latter new break causes ``{\sf woodc\ \raisebox{-.6ex}{\rule{.3ex}{2.5ex}}$_3$huck}'' (matching ``{\sf huck}''), which is why it has a superscript 3.

\begin{lemma} \label{lem:breaks}
Breaking the phrases takes at most logarithmic workspace and at most squares the number of phrases.  Afterwards, every phrase is either a single character or the concatenation of complete, consecutive, preceding phrases.
\end{lemma}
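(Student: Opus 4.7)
The plan is to maintain only demand breaks (the source-start and source-end positions $i_k$ and $i_k+\ell_k$ for each copy phrase $k$) and close them under a backward cascade. Whenever a break at position $q$ lands strictly inside some original phrase $j$'s range $[p_j, p_j + \ell_j - 1]$, splitting $j$ at $q$ creates two sub-phrases whose sources meet at $q - p_j + i_j$, so we add a break at that mirrored source position as well. No forward propagation is needed, because a phrase whose source is internally subdivided can simply expand (in the eventual CFG) to the concatenation of whatever sub-phrases cover its source.

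The main structural claim is that the process introduces at most $\Oh{z^2}$ new breaks, where $z$ is the number of LZ77 phrases. I would argue this by observing that each of the $2z$ initial demand positions launches a single cascade chain: at each step the current position $q$ lies strictly inside some phrase $j$, and the next position $q - p_j + i_j$ lies in the source of $j$, which (since this LZ77 variant forbids self-references) lies in a strictly earlier original phrase. Therefore each chain visits at most $z$ positions before terminating at an existing boundary, so together with the $z+1$ original boundaries the final break set has size $\Oh{z^2}$, yielding $\Oh{z^2}$ sub-phrases overall. Correctness is then immediate: every $i_k$ and every $i_k+\ell_k$ is a phrase boundary, so each copy phrase's source is a concatenation of complete, consecutive preceding phrases.

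For the logspace implementation, I would avoid materializing the break set and instead emit sub-phrases in a left-to-right sweep over the LZ77 parse, testing each interior position on demand. The test unrolls the backward cascade as a loop: starting from $(q, k)$, repeatedly replace it by $(q - p_k + i_k, k')$, where $k'$ is the unique phrase containing the new position and is located by one scan of the parse; answer yes if the chain ever hits an original boundary or one of the $2z$ demand positions, no otherwise. Since each step drops into a strictly earlier phrase, the loop terminates in at most $z$ iterations and only a constant number of $\Oh{\log n}$-bit pointers is ever stored. I expect the main fiddliness to be in organizing the nested scans of the LZ77 output without silently accumulating extra state; the combinatorial content, once the cascade rule is fixed, is straightforward.
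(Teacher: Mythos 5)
Your cascade rule, the termination argument (each step drops to a strictly earlier phrase because the parse forbids self-references), and the quadratic count are the same as the paper's; you phrase the bound as $2z$ chains of length at most $z$ (where $z$ is the number of LZ77 phrases), while the paper charges each phrase at most two breaks per later phrase, but these are the same $\Oh{z^2}$, and the closing correctness claim follows identically.

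The logspace test, however, has the cascade running in the wrong direction and is in fact vacuous. To decide whether a candidate position $q$ is a break you propose to iterate the mirror map $m(q) = q - p_k + i_k$ starting \emph{from} $q$ and to answer ``yes'' if the chain hits an original boundary or a demand position. But \emph{every} such chain terminates at an original boundary: each step strictly decreases the phrase index, and the first phrase is a single character, so the chain inevitably reaches a boundary. Your test therefore answers ``yes'' at every position, which would yield $n$ unit phrases, not $\Oh{z^2}$. A minimal counterexample: for $s = \mathsf{aaaa}$ the parse is $\mathsf{a}\,|\,\mathsf{a}\,|\,(1,2)$; the demand positions $1$ and $3$ are already boundaries, so position $4$ is not a break (the source $[1,2]$ of the last phrase is already a union of whole phrases), yet your test at $q = 4$ computes $m(4) = 2$, an original boundary, and wrongly reports a break. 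The logical error is that $q$ is a break iff $q = m^i(d)$ for some demand position $d$ and some $i \geq 0$ --- i.e.\ $q$ lies \emph{on} a chain launched from a demand position --- whereas your test checks whether some $m^j(q)$ is special; since $m$ is not injective, iterating it forward from $q$ cannot recover the chains that land on $q$. The repair is easy and still logspace: loop over the $2z$ demand positions as chain \emph{sources}, follow each source's chain for at most $z$ steps (locating phrases via partial sums of lengths on each step), and answer ``yes'' if any of those chains visits $q$. This uses $\Oh{1}$ pointers inside $\Oh{z^2}$ iterations and is what the paper's right-to-left sweep amounts to once you observe that the break set cannot itself be held in working memory.
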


\begin{proof}
Since the phrases' start points are the partial sums of their lengths, we can compute them with logarithmic workspace; therefore, we can assume without loss of generality that the start points are stored with the phrases.  We start with the rightmost phrase and work left.  For each phrase's endpoints, we compute the corresponding position in the matching, preceding substring (notice that the position corresponding to one phrase's finish may not be the one corresponding to the start of the next phrase to the right) and insert a new break there, if there is not one already.  If we have inserted a new break, then we iterate, computing the position corresponding to the new break; eventually, we will reach a point where there is already a break, so the iteration will stop.  This process requires only a constant number of pointers, so we can perform it with logarithmic workspace.  Also, since each phrase is broken at most twice for each of the phrases that initially follow it in the parse, the final number of phrases is at most the square of the initial number.  By inspection, after the process is complete every phrase is the concatenation of complete, consecutive, preceding phrases. \qed
\end{proof}

Notice that, after we break the phrases as described above, we can view the parse as a CFG.  For example, the parse for our running example corresponds to
\[\begin{array}{l@{\hspace{5ex}}l@{\hspace{5ex}}l}
\begin{array}{rcl}
\sfX_0 & \rightarrow & \sfX_1 \ldots \sfX_{35}\\
\sfX_1 & \rightarrow & \mathsf{h}\\
& \vdots &\\
\sfX_{13} & \rightarrow & \mathsf{d}
\end{array} &
\begin{array}{rcl}
\sfX_{14} & \rightarrow & \sfX_9\ \sfX_{10}\\
\sfX_{15} & \rightarrow & \mathsf{o}\\
& \vdots &\\
\sfX_{31} & \rightarrow & \sfX_{19} \ldots \sfX_{27}
\end{array} &
\begin{array}{rcl}
& \vdots &\\
\sfX_{34} & \rightarrow & \sfX_{10} \ldots \sfX_{13}\\
\sfX_{35} & \rightarrow & \mbox{\sf ?}\\
&&
\end{array}
\end{array}\]
where $\sfX_0$ is the starting nonterminal.  Unfortunately, while the number of productions is polynomial in the number of phrases in the LZ77 parse, it is not clear the size is and, moreover, the grammar is not in CNF.  Since all the righthand sides of the productions are either terminals or sequences of consecutive nonterminals, we could put the grammar into CNF by squaring the number of nonterminals --- giving us an approximation ratio cubic in $g$.  This would still be enough for us to prove our main result but, fortunately, such a large increase is not necessary.

\begin{lemma} \label{lem:cnf}
Putting the CFG into CNF takes logarithmic workspace and increases the number of productions by at most a logarithmic factor.  Afterwards, the size of the grammar is proportional to the number of productions.
\end{lemma}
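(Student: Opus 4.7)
My plan is to introduce one global balanced binary tree whose $p$ leaves are the phrase nonterminals $X_1,\ldots,X_p$ in left-to-right order, and then rewrite each of the original productions using the canonical segment-tree cover with respect to that tree.

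First, for every internal node $v$ of this (implicit, nearly balanced) binary tree I create a fresh nonterminal $Y_v$ and output the CNF production $Y_v\to c_1(v)\,c_2(v)$, where $c_1(v),c_2(v)$ are the nonterminals associated with its two children (either other $Y$'s, or original $X$'s at the leaf level). This contributes $O(p)$ CNF productions, and guarantees that every $Y_v$ derives exactly the concatenation of the original $X_i$'s in its subtree.

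Next, I process each original production of the form $X_i\to X_a X_{a+1}\cdots X_b$. I decompose the index range $[a,b]$ into its canonical cover $v_1,\ldots,v_k$ in the global tree; it is standard that $k=O(\log p)$. I replace the original rule by a small balanced binary tree rooted at $X_i$ whose $k$ leaves are $Y_{v_1},\ldots,Y_{v_k}$; this uses $k-1=O(\log p)$ auxiliary CNF productions. Original productions $X_i\to\text{terminal}$ are already in CNF and are kept unchanged. In total the new grammar has $O(p)+O(p\log p)=O(p\log p)$ productions, i.e.\ a logarithmic-factor blow-up, and since every righthand side now has length at most two, the grammar's size is proportional to this count, which is exactly what the lemma claims.

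The whole construction runs in logarithmic workspace because the global tree is implicit: each node $v$ is named by a short bit-string (its node index in $[1,2p]$), and its children, parent, and covered range are computed arithmetically using $O(\log p)$-bit counters; the canonical cover of any $[a,b]$ is produced by a single top-down walk using a constant number of such counters; the small balanced tree for each original rule is laid out position by position, again within $O(\log p)$ bits. The main obstacle is naming: one has to fix a deterministic, systematic encoding of the $Y_v$'s and of the per-rule auxiliaries so that all new nonterminals are mutually distinct, reproducible across passes, and computable in logspace. Once this encoding is pinned down --- for example, global-tree nonterminals indexed by node number, and local auxiliaries indexed by the pair (id of original rule, position inside its small tree) --- reading the broken LZ77 parse in order and emitting the new CNF productions becomes mechanical.
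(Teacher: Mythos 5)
Your proof is correct and takes essentially the same approach as the paper: both overlay a binary tree structure on the phrase nonterminals (the paper uses a forest of complete binary trees of decreasing power-of-two sizes, you use a single nearly balanced tree) and re-express each righthand-side interval via its canonical $O(\log p)$-node segment-tree cover, tied together with $O(\log p)$ fresh binary productions, with all bookkeeping done by $O(\log n)$-bit arithmetic on node indices. The forest-versus-single-tree distinction is cosmetic; both give the claimed logarithmic-factor blowup and logspace construction.
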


\begin{proof}
We build a forest of complete binary trees whose leaves are the nonterminals: if we consider the trees in order by size, the nonterminals appear in order from the leftmost leaf of the first tree to the rightmost leaf of the last tree; each tree is as large as possible, given the number of nonterminals remaining after we build the trees to its left.  Notice there are $\Oh{\log g}$ such trees, of total size at most $\Oh{g^2}$.  We then assign a new nonterminal to each internal node and output a production which takes that nonterminal to its children.  This takes logarithmic workspace and increases the number of productions by a constant factor.

Notice any sequence of consecutive nonterminals that spans at least two trees, can be written as the concatenation of two consecutive sequences, one of which ends with the rightmost leaf in one tree and the other of which starts with the leftmost leaf in the next tree.  Consider a sequence ending with the rightmost leaf in a tree; dealing with one that starts with a leftmost leaf is symmetric.  If the sequence completely contains that tree, we can write a binary production that splits the sequence into the prefix in the preceding trees, which is the expansion of a new nonterminal, and the leaves in that tree, which are the expansion of its root.  We need do this $\Oh{\log g}$ times before the remaining subsequence is contained within a single tree.  After that, we repeatedly produce new binary productions that split the subsequence into prefixes, again the expansions of new nonterminals, and suffixes, the expansions of roots of the largest possible complete subtree.  Since the size of the largest possible complete subtree shrinks by a factor of two at each step (or, equivalently, the height of its root decreases by 1), we need repeat $\Oh{\log g}$ times.  Again, this takes logarithmic workspace (we will give more details in the full version of this paper).

In summary, we may replace each production with $\Oh{\log g}$ new, binary productions.  Since the productions are binary, the number of symbols on the righthand sides is linear in the number of productions themselves. \qed
\end{proof}

Lemma~\ref{lem:cnf} is our most detailed result, and the diagram below showing the construction with our running example is also somewhat detailed.  On the left are modifications of the original productions, now made binary; in the middle are productions for the internal nodes of the binary trees; and on the right are productions breaking down the consecutive subsequences that appear on the righthand sides of the productions in the left column, until the subsequences are single, original nonterminals or nonterminals for nodes in the binary trees (i.e., those on the lefthand sides of the productions in the middle column).

\[\begin{array}{l@{\hspace{5ex}}l@{\hspace{5ex}}l}
\begin{array}{rcl}
\sfX_0 & \rightarrow & \sfX_{1, 32}\ \sfX_{33, 35}\\
\sfX_1 & \rightarrow & \mathsf{h}\\
& \vdots &\\
\sfX_{13} & \rightarrow & \mathsf{d}\\
\sfX_{14} & \rightarrow & \sfX_9\ \sfX_{10}\\
\sfX_{15} & \rightarrow & \mathsf{o}\\
& \vdots &\\
\sfX_{31} & \rightarrow & \sfX_{19, 24}\ \sfX_{25, 27}\\
& \vdots &\\
\sfX_{34} & \rightarrow & \sfX_{10, 12}\ \sfX_{13}\\
\sfX_{35} & \rightarrow & \mbox{\sf ?}
\end{array} &
\begin{array}{rcl}
\sfX_{1, 32} & \rightarrow & \sfX_{1, 16}\ \sfX_{17, 32}\\
\sfX_{1, 16} & \rightarrow & \sfX_{1, 8}\ \sfX_{9, 16}\\
\sfX_{1, 8} & \rightarrow & \sfX_{1, 4}\ \sfX_{5, 8}\\
& \vdots &\\
\sfX_{17, 32} & \rightarrow & \sfX_{17, 24}\ \sfX_{18, 32}\\
\sfX_{17, 24} & \rightarrow & \sfX_{17, 20}\ \sfX_{21, 24}\\
& \vdots &\\
\sfX_{29, 32} & \rightarrow & \sfX_{29, 30}\ \sfX_{31, 32}\\
\sfX_{33, 35} & \rightarrow & \sfX_{33, 34}\ \sfX_{35}\\
\sfX_{1, 2} & \rightarrow & \sfX_1\ \sfX_2\\
& \vdots &\\
\sfX_{33, 34} & \rightarrow & \sfX_{33}\ \sfX_{34}
\end{array} &
\begin{array}{rcl}
\sfX_{19, 24} & \rightarrow & \sfX_{19, 20}\ \sfX_{21, 24}\\
\sfX_{25, 27} & \rightarrow & \sfX_{25, 26}\ \sfX_{27}\\
& \vdots &\\
\sfX_{10, 12} & \rightarrow & \sfX_{10}\ \sfX_{11, 12}
\end{array}
\end{array}\]

Combined with Lemma~\ref{lem:C+R02}, Lemmas~\ref{lem:breaks} and~\ref{lem:cnf} imply that with logarithmic workspace we can build a CFG in CNF whose size is $\Oh{g^2 \log g}$.  We can use a similar approach with binary trees to build a CFG in CNF of size $\Oh{n}$ that generates $s$ and only $s$, still using logarithmic workspace.  If we combine all non-terminals that have the same expansion, which also takes logarithmic workspace, then this becomes Kieffer, Yang, Nelson and Cosman's~\cite{KYNC00} {\sc Bisection} algorithm, which gives an $\Oh{\sqrt{n / \log n}}$-approximation~\cite{CLLP+05}.  By taking the smaller of these two CFGs we achieve an $\Oh{\min \left( g \log g, \sqrt{n / \log n} \right)}$-approximation.  Therefore, as we claimed in Lemma~\ref{lem:cfg}, with logarithmic workspace we can turn the LZ77 parse into a CFG whose size is within a $\Oh{\min \left( g \log g, \sqrt{n / \log n} \right)}$-factor of minimum.

\section{Recent Work} \label{sec:recent}

We recently improved the bound on the approximation ratio in Lemma~\ref{lem:cfg} from $\Oh{\min \left( g \log g, \sqrt{n / \log n} \right)}$ to $\Oh{\min \left( g, \sqrt{n / \log n} \right)}$.  The key observation is that, by the definition of the LZ77 parse, the first occurrence of any substring must touch or cross a break between phrases.  Consider any phrase in the parse obtained by applying Lemma~\ref{lem:breaks} to the LZ77 parse.  By the observation above, that phrase can be written as the concatenation of some consecutive new phrases (all contained within one old phrase and ending at that old phrase's right end), some consecutive old phrases, and some more consecutive new phrases (all contained within one old phrase and starting at the old phrase's left end).  Since there are $\Oh{g}$ old phrases, there are $\Oh{g^2}$ sequences of consecutive old phrases; since there are $\Oh{g^2}$ new phrases, there are $\Oh{g^2}$ sequences of consecutive new phrases that are contained in one old phrase and either start at that old phrase's right end or end at that old phrase's left end.

While working on the improvement above, we realized how to improve the bound further, to $\Oh{\min \left( g, 4^{\sqrt{\log n}} \right)}$.  To do this, we choose a value $b$ between 2 and $n$ and, for \(0 \leq i \leq \log_b n\), we associate a nonterminal to each of the $b$ blocks of \(\lceil n / b^i \rceil\) characters to the left and right of each break; we thus start building the grammar with $\Oh{b g \log_b n}$ nonterminals.  We then add $\Oh{b g \log_b n}$ binary productions such that any sequence of nonterminals associated with a consecutive sequence of blocks, can be derived from $\Oh{1}$ nonterminals.  Notice any substring is the concatenation of 0 or 1 partial blocks, some number of full blocks to the left of a break, some number of blocks to the right of a break, and 0 or 1 more partial blocks.  We now add more binary productions as follows: we start with $s$ (the only block of length \(\left\lceil n / b^0 \right\rceil = n\)); find the first break it touches or crosses (in this case it is the start of $s$); consider $s$ as the concatenation of blocks of size \(\left\lceil n / b^1 \right\rceil\) (in this case only the rightmost block can be partial); associate nonterminals to the partial blocks (if they exist); add $\Oh{1}$ productions to take the symbol associated to $s$ (in this case, the start symbol) to the sequence of nonterminals associated with the smaller blocks in order from left to right; and recurse on each of the smaller blocks.  To guarantee each smaller block touches or crosses a break, we work on the first occurrence in $s$ of the substring contained in that block.  We stop recursing when the block size is 1, and add $\Oh{b g}$ productions taking those blocks' nonterminals to the appropriate characters.

Analysis shows that the number of productions we add during the recursion is proportional to the number of blocks involved, either full or partial.  Since the number of distinct full blocks in any level of recursion is $\Oh{b g}$ and the number of partial blocks is at most twice the number of blocks (full or partial) in the previous level of recursion, the number of productions we add during the recursion is $\Oh{2^{\log_b n} b g}$.  Therefore, the grammar has size $\Oh{2^{\log_b n} b g}$; when \(b = 2^{\sqrt{\log n}}\), this is $\Oh{4^{\sqrt{\log n}} g}$.  The first of the two key observations that let us build the grammar in logarithmic workspace, is that we can store the index of a block (full or partial) with respect to the associated break, in $\Oh{\sqrt{\log n}}$ bits; therefore, we can store $\Oh{1}$ indices for each of the $\Oh{\sqrt{\log n}}$ levels of recursion, in a total of $\Oh{\log n}$ bits.  The second key observation is that, given the indices of the block we are working on in each level of recursion, with respect to the appropriate break, we can compute the start point and end point of the block we are currently working on in the deepest level of recursion.  We will give details of these two improvements in the full version of this paper.

While working on this second improvement, we realized that we can use the same ideas to build a compressed representation that allows efficient random access.  We refer the reader to the recent papers by Kreft and Navarro~\cite{KN10} and Bille, Landau and Weimann~\cite{BLW10} for background on this problem.  Suppose that, for each of the $\Oh{2^{\sqrt{\log n}} g \sqrt{\log n}}$ full blocks described above, we store a pointer to the first occurrence in $s$ of the substring in that block, as well as a pointer to the first break that first occurrence touches or crosses.  Notice this takes a total of $\Oh{2^{\sqrt{\log n}} g (\log n)^{3 / 2}}$ bits.  Then, given a block's index and an offset in that block, in $\Oh{1}$ time we can compute a smaller block's index and offset in that smaller block, such that the characters in those two positions are equal; if the larger block has size 1, in $\Oh{1}$ time we can return the character.  Since $s$ itself is a block, an offset in it is just a character's position, and there are $\Oh{\sqrt{\log n}}$ levels of recursion, it follows that we can access any character in $\Oh{\sqrt{\log n}}$ time.  Further analysis shows that it takes $\Oh{\sqrt{\log n} + \ell}$ time to access a substring of length $\ell$.  Of course, for any positive constant $\epsilon$, if we are willing to use $\Oh{n^{\epsilon} g}$ bits of space, then we can access any character in constant time.  If we make the data structure slightly larger and more complicated --- e.g., storing searchable partial sums at the block boundaries --- then, at least for strings over fairly small alphabets, we can also support fast rank and select queries.

This implementation makes it easy to see the data structure's relation to LZ77 and grammar-based compression.  We can use a simpler implementation, however, and use LZ77 only in the analysis.  Suppose that, for \(0 \leq i \leq \log_b n\), we break $s$ into consecutive blocks of length \(\lceil n / b^i \rceil\) (the last block may be shorter), always starting from the first character of $s$.  For each block, we store a pointer to the first occurrence in $s$ of that block's substring.  Given a block's index and an offset in that block, in $\Oh{1}$ time we can again compute a smaller block's index and offset in that smaller block, such that the characters in those two positions are equal: the new block's index is the sum of pointer and the old offset, divided by the new block length and rounded down; the new offset is the sum of the pointer and the old offset, modulo the new block length.  We can discard any block that cannot be visited during a query, so this data structure takes at most a constant factor more space than the one described above.  Indeed, this data structure seems likely to be smaller in practice, because blocks of the same size can overlap in the previous data structure but cannot in this one.  We plan to implement this data structure and report the results in a future paper.

\bibliographystyle{splncs}
\bibliography{lata10}

\begin{thebibliography}{10}

\bibitem{Gag09}
Gagie, T.:
\newblock On the value of multiple read/write streams for data compression.
\newblock In: Proceedings of the Symposium on Combinatorial Pattern Matching.
  (2009)  68--77

\bibitem{CN09}
Claude, F., Navarro, G.:
\newblock Self-indexed text compression using straight-line programs.
\newblock In: Proceedings of the Symposium on Mathematical Foundations of
  Computer Science. (2009)  235--246

\bibitem{Lif07}
Lifshits, Y.:
\newblock Processing compressed texts: A tractability border.
\newblock In: Proceedings of the Symposium on Combinatorial Pattern Matching.
  (2007)  228--240

\bibitem{LMWZ09}
Lifshits, Y., Mozes, S., Weimann, O., {Ziv-Ukelson}, M.:
\newblock Speeding up {HMM} decoding and training by exploiting sequence
  repetitions.
\newblock Algorithmica \textbf{54}(3) (2009)  379--399

\bibitem{KY00}
Kieffer, J.C., Yang, E.:
\newblock Grammar-based codes: A new class of universal lossless source codes.
\newblock IEEE Transactions on Information Theory \textbf{46}(3) (2000)
  737--754

\bibitem{NR08}
Navarro, G., Russo, L.M.S.:
\newblock Re-pair achieves high-order entropy.
\newblock In: Proceedings of the Data Compression Conference. (2008)  537

\bibitem{SS82}
Storer, J.A., Szymanski, T.G.:
\newblock Data compression via textual substitution.
\newblock Journal of the ACM \textbf{29}(4) (1982)  928--951

\bibitem{CLLP+05}
Charikar, M., Lehman, E., Liu, D., Panigrahy, R., Prabhakaran, M., Sahai, A.,
  Shelat, A.:
\newblock The smallest grammar problem.
\newblock IEEE Transactions on Information Theory \textbf{51}(7) (2005)
  2554--2576

\bibitem{Ryt03}
Rytter, W.:
\newblock Application of {Lempel-Ziv} factorization to the approximation of
  grammar-based compression.
\newblock Theoretical Computer Science \textbf{302}(1--3) (2003)  211--222

\bibitem{ZL77}
Ziv, J., Lempel, A.:
\newblock A universal algorithm for sequential data compression.
\newblock IEEE Transactions on Information Theory \textbf{23}(3) (1977)
  337--343

\bibitem{Sak05}
Sakamoto, H.:
\newblock A fully linear-time approximation algorithm for grammar-based
  compression.
\newblock Journal of Discrete Algorithms \textbf{3}(2--4) (2005)  416--430

\bibitem{LM00}
Larsson, N.J., Moffat, A.:
\newblock Offline dictionary-based compression.
\newblock Proceedings of the IEEE \textbf{88}(11) (2000)  1722--1732

\bibitem{SKS04}
Sakamoto, H., Kida, T., Shimozono, S.:
\newblock A space-saving linear-time algorithm for grammar-based compression.
\newblock In: Proceedings of the Symposium on String Processing and Information
  Retrieval. (2004)  218--229

\bibitem{SMKS09}
Sakamoto, H., Maruyama, S., Kida, T., Shimozono, S.:
\newblock A space-saving approximation algorithm for grammar-based compression.
\newblock IEICE Transactions \textbf{92-D}(2) (2009)  158--165

\bibitem{AMS99}
Alon, N., Matias, Y., Szegedy, M.:
\newblock The space complexity of approximating the frequency moments.
\newblock Journal of Computer and System Sciences \textbf{58}(1) (1999)
  137--147

\bibitem{BBDMW02}
Babcock, B., Babu, S., Datar, M., Motwani, R., Widom, J.:
\newblock Models and issues in data stream systems.
\newblock In: Proceedings of the Symposium on Database Systems. (2002)  1--16

\bibitem{Mut05}
Muthukrishnan, S.:
\newblock Data Streams: Algorithms and Applications. Volume 1(2) of Foundations
  and Trends in Theoretical Computer Science.
\newblock now Publishers (2005)

\bibitem{GS05}
Grohe, M., Schweikardt, N.:
\newblock Lower bounds for sorting with few random accesses to external memory.
\newblock In: Proceedings of the Symposium on Database Systems. (2005)
  238--249

\bibitem{MP80}
Munro, J.I., Paterson, M.:
\newblock Selection and sorting with limited storage.
\newblock Theoretical Computer Science \textbf{12} (1980)  315--323

\bibitem{GHS09}
Grohe, M., Hernich, A., Schweikardt, N.:
\newblock Lower bounds for processing data with few random accesses to external
  memory.
\newblock Journal of the ACM \textbf{56}(3) (2009)  1--58

\bibitem{BJR07}
Beame, P., Jayram, T.S., Rudra, A.:
\newblock Lower bounds for randomized read/write stream algorithms.
\newblock In: Proceedings of the Symposium on Theory of Computing. (2007)
  689--698

\bibitem{BH08}
Beame, P., Huynh, T.:
\newblock On the value of multiple read/write streams for approximating
  frequency moments.
\newblock In: Proceedings of the Symposium on Foundations of Computer Science.
  (2008)  499--508

\bibitem{HS08}
Hernich, A., Schweikardt, N.:
\newblock Reversal complexity revisited.
\newblock Theoretical Computer Science \textbf{401}(1--3) (2008)  191--205

\bibitem{CY91}
Chen, J., Yap, C.:
\newblock Reversal complexity.
\newblock SIAM Journal on Computing \textbf{20}(4) (1991)  622--638

\bibitem{SLZ95}
Sheinwald, D., Lempel, A., Ziv, J.:
\newblock On encoding and decoding with two-way head machines.
\newblock Information and Computation \textbf{116}(1) (1995)  128--133

\bibitem{DS96}
{De Agostino}, S., Storer, J.A.:
\newblock On-line versus off-line computation in dynamic text compression.
\newblock Information Processing Letters \textbf{59}(3) (1996)  169--174

\bibitem{GM07}
Gagie, T., Manzini, G.:
\newblock Space-conscious compression.
\newblock In: Proceedings of the Symposium on Mathematical Foundations of
  Computer Science. (2007)  206--217

\bibitem{KM99}
Kosaraju, S.R., Manzini, G.:
\newblock Compression of low entropy strings with {Lempel-Ziv} algorithms.
\newblock SIAM Journal on Computing \textbf{29}(3) (1999)  893--911

\bibitem{AALR09}
Amir, A., Aumann, Y., Levy, A., Roshko, Y.:
\newblock Quasi-distinct parsing and optimal compression methods.
\newblock In: Proceedings of the Symposium on Combinatorial Pattern Matching.
  (2009)  12--25

\bibitem{AMMP08}
Albert, P., Mayordomo, E., Moser, P., Perifel, S.:
\newblock Pushdown compression.
\newblock In: Proceedings of the Symposium on Theoretical Aspects of Computer
  Science. (2008)  39--48

\bibitem{ZL78}
Ziv, J., Lempel, A.:
\newblock Compression of individual sequences via variable-rate coding.
\newblock IEEE Transactions on Information Theory \textbf{24}(5) (1978)
  530--536

\bibitem{MM09}
Mayordomo, E., Moser, P.:
\newblock Polylog space compression is incomparable with {Lempel-Ziv} and
  pushdown compression.
\newblock In: Proceedings of the Conference on Current Trends in Theory and
  Practice of Informatics. (2009)  633--644

\bibitem{MMN09}
Magniez, F., Mathieu, C., Nayak, A.:
\newblock Recognizing well-parenthesized expressions in the streaming model.
\newblock Technical Report TR09-119, Electronic Colloquium on Computational
  Complexity (2009)

\bibitem{FGM10}
Ferragina, P., Gagie, T., Manzini, G.:
\newblock Lightweight data indexing and compression in external memory.
\newblock In: Proceedings of the Latin American Theoretical Informatics
  Symposium. (2010) To appear.

\bibitem{Sch07}
Schweikardt, N.:
\newblock Machine models and lower bounds for query processing.
\newblock In: Proceedings of the Symposium on Principles of Database Systems.
  (2007)  41--52

\bibitem{NR04}
Navarro, G., Raffinot, M.:
\newblock Practical and flexible pattern matching over {Ziv-Lempel} compressed
  text.
\newblock Journal of Discrete Algorithms \textbf{2}(3) (2004)  347--371

\bibitem{KYNC00}
Kieffer, J.C., Yang, E., Nelson, G.J., Cosman, P.C.:
\newblock Universal lossless compression via multilevel pattern matching.
\newblock IEEE Transactions on Information Theory \textbf{46}(4) (2000)
  1227--1245

\bibitem{KN10}
Kreft, S., Navarro, G.:
\newblock {LZ77}-like compression with fast random access.
\newblock In: Proceedings of the Data Compression Conference. (2010) To appear.

\bibitem{BLW10}
Bille, P., Landau, G., Weimann, O.:
\newblock Random access to grammar compressed strings.
\newblock {\tt http://arxiv.org/abs/1001.1565} (2010)

\end{thebibliography}

\end{document}